\documentclass[11pt]{article}
\usepackage[margin=1in]{geometry}
\usepackage{amsmath,amssymb,amsthm,mathtools}
\usepackage[hidelinks]{hyperref}

\newtheorem{theorem}{Theorem}[section]
\newtheorem{lemma}[theorem]{Lemma}
\newtheorem{proposition}[theorem]{Proposition}
\newtheorem{corollary}[theorem]{Corollary}
\newtheorem{fact}[theorem]{Fact}
\theoremstyle{definition}

\theoremstyle{remark}
\newtheorem{remark}[theorem]{Remark}

\newcommand{\pP}{\mathord{\oplus}\mathrm{P}}
\newcommand{\BP}{\mathrm{BP}\cdot}
\newcommand{\BPP}{\mathrm{BPP}}
\newcommand{\NP}{\mathrm{NP}}
\newcommand{\PH}{\mathrm{PH}}
\newcommand{\cP}{\mathrm{P}}
\newcommand{\sP}{\#\mathrm{P}}
\newcommand{\Almost}{\mathrm{Almost}\text{-}}
\newcommand{\F}{\mathbb{F}_2}
\newcommand{\Maj}{\mathrm{Maj}}
\newcommand{\GF}{\mathrm{GF}}
\newcommand{\poly}{\mathrm{poly}}

\title{$\Almost\pP = \BP\pP$ and a Random-Oracle Proof of Toda's Theorem}
\author{Lance Fortnow}
\date{September 2026}

\begin{document}
\maketitle

\begin{abstract}
Using the recent exponential correlation bounds of Chattopadhyay, Hatami, Lee,
Lovett, Tal and Viola between $\F$-polynomials and the XOR of majorities, we
show that $\Almost\pP = \BP\pP = \BPP^{\pP}$, where $\Almost\pP$ is the class of
languages in $\pP^R$ for a random oracle $R$ with probability one. Combined with
the easy fact that $\PH^R \subseteq \pP^R$ relative to a random oracle, an idea
going back to Regan and Royer, this gives a proof of the first half of Toda's
theorem, $\PH \subseteq \BP\pP$, in which no probabilistic quantifier ever
has to be moved through an oracle. We compare the argument with the simple proof
of Toda's theorem in Fortnow (2009).
\end{abstract}

\section{Introduction}

For a relativizable class $\mathcal{C}$, let
\[
\Almost\mathcal{C} = \{\,L : \Pr_R[L \in \mathcal{C}^R] = 1\,\},
\]
where $R$ is chosen by putting each string in $R$ independently with
probability $1/2$. Bennett and Gill~\cite{BG81} and Ambos-Spies~\cite{AS86}
characterized $\BPP$ as $\Almost\cP$, and Nisan and Wigderson~\cite{NW94},
using a pseudorandom generator built from the hardness of parity for
constant-depth circuits, showed $\Almost\PH = \PH$. The natural analogue for
parity computation, $\Almost\pP = \BP\pP$, requires a pseudorandom generator
that fools polynomials over $\F$ of degree $\poly(n)$ on $2^{\poly(n)}$
variables with seed length $\poly(n)$. Earlier generators for low-degree
polynomials~\cite{Bog05,Lov09,Vio09} have seed length exponential in the
degree and are useless here. The breakthrough correlation bound of
Chattopadhyay, Hatami, Lee, Lovett, Tal and Viola~\cite{CHLLTV26} supplies
exactly the generator we need.

\begin{theorem}\label{thm:main}
$\Almost\pP = \BP\pP = \BPP^{\pP}$.
\end{theorem}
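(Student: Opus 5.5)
We establish three inclusions: $\BP\pP \subseteq \BPP^{\pP} \subseteq \Almost\pP \subseteq \BP\pP$.

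\textbf{The easy inclusions.} The containment $\BP\pP \subseteq \BPP^{\pP}$ is immediate: a $\BPP$ machine guesses the random string and asks the $\pP$ oracle a single question. The reverse containment $\BPP^{\pP}\subseteq \BP\pP$ is not needed here, since it follows from the cycle.

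For $\BPP^{\pP} \subseteq \Almost\pP$, use $\pP^{\pP}=\pP$. A $\BPP^{\pP}$ machine amplified to error $2^{-2n}$ uses its random bits; replace these with bits read from $R$ at fresh query positions indexed by the input. A Borel--Cantelli argument in the Bennett--Gill style shows that, with probability one over $R$, the resulting $\cP^{\pP, R}$ machine errs on only finitely many inputs. Those inputs can be patched into a finite table. Relativized closure $\cP^{\pP^R}\subseteq\pP^R$ then gives $L\in\pP^R$.

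\textbf{The main inclusion $\Almost\pP \subseteq \BP\pP$.} Let $L\in\Almost\pP$. There are countably many oracle $\pP$ machines, so for some fixed machine $M$ we have $\Pr_R[L=L(M^R)]>0$. By the standard Bennett--Gill/Lebesgue density argument (as used by Nisan--Wigderson for $\PH$), we may then take a majority-type bound. Specifically, for all sufficiently large $n$ and every $x$ of length $n$,
\[
\Pr_R[M^R(x)=L(x)]\ge 2/3,
\]
possibly after fixing a finite prefix of $R$ and hardwiring it.

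Fix $x$ with $|x|=n$. The acceptance of $M^R(x)$ is the parity of accepting paths. Each path is a polynomial-time computation querying at most $p(n)$ bits of $R$. So $M^R(x)$ is a polynomial over $\F$ of degree $d=\poly(n)$ in the $N=2^{\poly(n)}$ variables $R\restriction_{\le p(n)}$. Each monomial is an AND of query answers along the path, summed mod 2; writing each path's indicator as a polynomial in the queried bits gives degree at most $p(n)$.

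The goal is now to replace the truly random $R$ by the output $G(s)$ of a generator with seed length $\poly(n)$. The generator must be computable locally: bit $i$ of $G(s)$ must be computable in time $\poly(n)$ from $s$ and $i$. The deviation $|\Pr_R[M^R(x)=1]-\Pr_s[M^{G(s)}(x)=1]|$ must be at most $1/10$. With such a $G$, the language $L$ is in $\BP\pP$: choose $s$ at random, then run $M$ with oracle queries answered by computing $G(s)$, which is a $\pP$ computation.

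\textbf{Building the generator (the main obstacle).} Build $G$ as a Nisan--Wigderson generator from the CHLLTV hard function $f=$ XOR of majorities on $m=\poly(n)$ bits. It has correlation $2^{-\Omega(m^{c})}$ with every $\F$-polynomial of degree up to $m^{c'}$, and it is computable in polynomial time. Take a combinatorial design of $N$ sets $S_1,\dots,S_N$ of size $m$ in a universe of size $\poly(m)$, with pairwise intersections at most $\log N=\poly(n)$. Set $G(s)_i=f(s|_{S_i})$.

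The standard hybrid argument converts a distinguisher polynomial $P$ of degree $d$ into a predictor for $f$. It fixes the seed bits outside $S_i$. Each other output bit $j$ then depends on at most $|S_i\cap S_j|\le \log N$ seed bits, so it is a polynomial of degree at most $\log N$. Composition gives a polynomial of degree $d\log N=\poly(n)$ correlating with $f$ at advantage $1/(10N)=2^{-\poly(n)}$.

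The delicate step is choosing $m$ so that both requirements hold at once. The degree $d\log N$ must be below the CHLLTV degree threshold $m^{c'}$. The correlation bound $2^{-m^{c}}$ must beat $1/(10N)$, which allows for the $+1$ shift in the predictor from the hybrid step. Since all quantities are $\poly(n)$ in the exponent, a sufficiently large polynomial $m$ achieves both, provided the CHLLTV bound is exponential in a power of $m$ while allowing polynomial degree. This parameter check is where I expect the real work to lie.
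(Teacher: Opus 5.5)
Your overall plan matches the paper's: countability of machines, a density argument that hard-wires a prefix $\sigma$, the degree bound for $M^R(x)$, and a Nisan--Wigderson generator built from the XOR of majorities and analyzed by hybrids. Two of your choices differ from the paper and are both fine: the cycle through $\BPP^{\pP}$ for the easy direction, and one design of $m$-sets instead of the paper's $k$ seed blocks sharing a design of $\ell$-sets.

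The gap is a requirement you state and then never meet. You require bit $i$ of $G(s)$ to be computable in time $\poly(n)$, but you simply ``take a combinatorial design of $N$ sets'' with $N=2^{\poly(n)}$. The usual designs, including the one used in the CHLLTV construction, are built sequentially in time polynomial in $N$. So computing even one $S_i$ costs exponential time. Your final step then fails: answering $M$'s queries by computing $G(s)$ is no longer something a $\pP$ machine can do. This is precisely the one point the paper has to change relative to CHLLTV.

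The fix is an explicit algebraic design in which each set is computed locally. Fix $A\subseteq\GF(q)$ of size $\ell$, with $q\ge\ell$ a power of two. Let $f_w$ be the polynomial of degree at most $T$ whose $0/1$ coefficients are the bits of $1w$, padded to length $T+1$. Set $S_w=\{(\alpha,f_w(\alpha)):\alpha\in A\}$. Distinct $w$ give distinct $f_w$, so $|S_w\cap S_v|\le T$, and $S_w$ takes $\poly(T)$ field operations to compute.

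By contrast, the parameter check you expected to be the real work is routine with the explicit bound $(2D/\sqrt{\ell})^k$ for degree $D$ and $k$ blocks of $\ell$ bits. With oracle degree $T$ and intersections at most $T$, the hybrid polynomials have degree $D\le T^2$. Take $\ell\ge 16T^2(T+1)^2$ and $k=\lceil\log(20N)\rceil$. Then each of the $N$ hybrid steps costs at most $2^{-k-1}\le 1/(40N)$, and the seed length stays polynomial in $T$.
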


The motivation is Toda's theorem~\cite{Tod91}. Relative to a random oracle
$R$, the polynomial hierarchy collapses into $\pP^R$ by nothing more than
Valiant--Vazirani~\cite{VV86} and Papadimitriou--Zachos~\cite{PZ83} applied
level by level; this observation is due to Regan and Royer~\cite{RR95}, who
showed that $\PH^R$ is properly contained in $\pP^R$ for a random oracle $R$.
Since every language in $\PH$ is in $\PH^R$, Theorem~\ref{thm:main} then
yields $\PH \subseteq \BP\pP$.

\section{Preliminaries}

We identify an oracle $R$ with its characteristic sequence in $\{0,1\}^\omega$
and use the uniform (Lebesgue) measure $\mu$. For a finite string $\sigma$ let
$[\sigma]$ be the cylinder of oracles whose characteristic sequence begins
with $\sigma$. All oracle machines are clocked, so a machine running in time
$T(n)$ queries only strings of length at most $T(n)$.

A language $L$ is in $\pP^R$ if there is a polynomial-time nondeterministic
oracle machine $M$ such that $x\in L$ if and only if $M^R(x)$ has an odd
number of accepting paths; we write $M^R(x)\in\{0,1\}$ for this parity.
A language $L$ is in $\BP\pP$ if there are $A\in\pP$ and a polynomial $q$
with $\Pr_{r\in\{0,1\}^{q(|x|)}}[A(x,r)=L(x)]\ge 2/3$ for every $x$.

\begin{fact}[Papadimitriou--Zachos~\cite{PZ83}]\label{fact:pz}
For every oracle $A$, $\pP^{\pP^A} = \pP^A$, and hence $\cP^{\pP^A}=\pP^A$.
\end{fact}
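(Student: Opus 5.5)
The argument is the standard Papadimitriou--Zachos simulation, carried out relative to an arbitrary oracle $A$; nothing in it depends on $A$. The inclusion $\pP^A \subseteq \pP^{\pP^A}$ is trivial. For the other direction I fix a nondeterministic polynomial-time machine $M$ with oracle access to a language $B \in \pP^A$. Say $B$ is witnessed by a nondeterministic oracle machine $N$, so that $y \in B$ iff $N^A(y)$ has an odd number of accepting paths. The goal is a single nondeterministic machine $M'$ with oracle $A$ whose number of accepting paths on input $x$ has the same parity as the number of accepting paths of $M^B(x)$.

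\textbf{Construction of $M'$.} On input $x$, $M'$ guesses a computation path $p$ of $M$ together with guessed answers $b_1,\dots,b_m$ to the oracle queries $y_1,\dots,y_m$ made along $p$; here $m$ is polynomial and we may pad so that $m$ is fixed. If $p$ with these answers is not accepting, $M'$ rejects. Otherwise, for each $i$, $M'$ must produce a number of paths that is odd exactly when the guess $b_i$ is correct. It does so by running one of two gadgets:
\begin{itemize}
\item if $b_i = 1$, it runs $N^A(y_i)$;
\item if $b_i = 0$, it runs a modified $N^A(y_i)$ with one extra dummy accepting branch, which flips the parity.
\end{itemize}
These $m$ computations are run in sequence, and $M'$ accepts only if every one of them accepts. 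The number of accepting continuations is then the product of the $m$ path counts.

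\textbf{Parity calculation.} The product is odd iff every factor is odd, that is, iff every guess $b_i$ equals $B(y_i)$. So each guessed tuple with some wrong answer contributes an even count. The unique tuple of correct answers along an accepting path $p$ of $M^B(x)$ contributes an odd count. Hence the total parity equals the parity of the number of accepting paths of $M^B(x)$, as required. The running time is polynomial, because $m$ and the $|y_i|$ are polynomially bounded and $M'$ queries only $A$.

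The only real subtlety is bookkeeping: making sure the paths of $M$ that are discarded, or on which wrong answers were guessed, contribute even counts, and that the adaptivity of the queries is respected. Adaptivity is handled because $M'$ simulates $p$ step by step, using the guessed $b_i$ to determine the later queries.

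\textbf{The $\cP$ consequence.} $\cP^{\pP^A} \subseteq \pP^{\pP^A}$, since a deterministic machine is a nondeterministic one with exactly one path, and the latter class equals $\pP^A$ by the above. Conversely, $\pP^A \subseteq \cP^{\pP^A}$ with a single query. This gives $\cP^{\pP^A} = \pP^A$.
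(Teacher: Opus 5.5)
Your proposal is correct, and it is the standard Papadimitriou--Zachos argument: guess the oracle answers, check each one with either $N^A$ or its parity-flipped version, and multiply the path counts. The paper cites this result without giving a proof. One small point: padding to a fixed $m$ must mean adding dummy queries that are themselves checked by gadgets. Unchecked free guess bits would multiply the count of each short path by an even power of two and destroy the parity. Padding is not actually needed, since the product argument works with a variable number of queries.
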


In particular $\pP^A$ is closed under polynomial-time Turing reductions,
so $\BP\pP$ can be amplified to error $2^{-q(n)}$ for any polynomial $q$ by
taking majority votes, and $\BP\pP=\BPP^{\pP}$~\cite{Tod91}.

\begin{fact}[Valiant--Vazirani~\cite{VV86}]\label{fact:vv}
Let $W\subseteq\{0,1\}^m$ be nonempty. Choose $j\in\{1,\dots,m+1\}$ uniformly
and a random affine map $h(y)=Ay+b$ from $\F^m$ to $\F^j$. Then
$\Pr\big[\,|\{y\in W: h(y)=0\}| = 1\,\big]\ge 1/(8(m+1))$. The choice uses
$O(m^2)$ random bits.
\end{fact}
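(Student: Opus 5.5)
The statement just above is Fact~\ref{fact:vv}, the Valiant--Vazirani isolation lemma. The plan is to fix the ``right'' output length and use pairwise independence of random affine maps.

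Let $s=|W|\ge 1$ and choose $k\in\{0,\dots,m\}$ with $2^{k}\le s<2^{k+1}$. Set $j^\ast=k+2$, which lies in $\{2,\dots,m+2\}$. Since $j$ is drawn from $\{1,\dots,m+1\}$, we need a small adjustment. If $k+2\le m+1$, take $j^\ast=k+2$. Otherwise $k=m$, so $s=2^m$ and $W$ is the whole cube; in that case take $j^\ast=m+1$. In every case $j^\ast$ is selected with probability $1/(m+1)$, and the bound $1/(8(m+1))$ follows once we show that, conditioned on $j=j^\ast$, isolation happens with probability at least $1/8$.

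For fixed $j$, the map $h(y)=Ay+b$ with $A,b$ uniform has two properties. For each $y$, $h(y)$ is uniform on $\F^j$, because $b$ is uniform. For $y\ne y'$, $h(y)-h(y')=A(y-y')$ is uniform and independent of $h(y)$. So the events $E_y=\{h(y)=0\}$ are pairwise independent, each with probability $p=2^{-j}$.

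By inclusion--exclusion (Bonferroni),
\[
\Pr[\text{exactly one}] \;\ge\; \sum_y \Pr[E_y]-2\sum_{y<y'}\Pr[E_y\wedge E_{y'}] \;\ge\; sp - s^2p^2 = sp(1-sp).
\]
With $j^\ast=k+2$ we have $sp\in[1/4,1/2)$. The function $t(1-t)$ is at least $3/16\ge 1/8$ on that interval.

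In the edge case $s=2^m$ with $j=m+1$, we have $sp=1/2$ and $1/2\cdot 1/2=1/4\ge 1/8$. When $m$ is small, so that $k+2>m+1$ with $s<2^m$, choose $j^\ast=m+1$ as well. Then $sp\ge 2^k/2^{m+1}$. Since $k\ge m-1$, this gives $sp\in[1/4,1)$, and $sp\le 1/2$ because $s\le 2^m$. So the same bound holds. The step I expect to need the most care is exactly this boundary bookkeeping: making sure some $j$ in the allowed range $\{1,\dots,m+1\}$ puts $sp$ in $[1/4,1/2]$.

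Finally, count random bits. Picking $j$ takes $O(\log m)$ bits. Picking $A$ and $b$ takes $j(m+1)\le (m+1)^2$ bits, for a total of $O(m^2)$. If one prefers a fixed-length string, use $A\in\F^{(m+1)\times m}$ and $b\in\F^{m+1}$ and truncate to the first $j$ rows. This preserves the uniformity and pairwise-independence properties above.
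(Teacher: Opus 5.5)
Your proof is correct and is the standard pairwise-independence argument of Valiant and Vazirani, which the paper does not reprove but simply cites; the $1/8$ in the statement is the crude form of your bound $sp(1-sp)$ with $sp\in[1/4,1/2]$. Two harmless blemishes: your ``small $m$'' case is empty, since $k+2>m+1$ already forces $k=m$ and $s=2^m$; and an exactly uniform $j\in\{1,\dots,m+1\}$ cannot be drawn from a fixed number of coins unless $m+1$ is a power of two, but taking $j=1+(u\bmod(m+1))$ with $u$ uniform on $\lceil\log_2(m+1)\rceil+2$ bits gives each $j$ probability at least $3/(4(m+1))$, and your slack $\tfrac34\cdot\tfrac{3}{16}>\tfrac18$ absorbs this.
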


\section{The generator}

We need the generator of~\cite[Theorem A.1]{CHLLTV26}, whose analysis rests
on their main theorem: for every polynomial $p:\F^{k\ell}\to\F$ of degree at
most $d$,
\begin{equation}\label{eq:corr}
\Big|\,\mathbb{E}_x\big[(-1)^{p(x)+\bigoplus_{a=1}^k \Maj(x^{(a)})}\big]\Big|
\le \Big(\frac{2d}{\sqrt{\ell}}\Big)^k,
\end{equation}
where $x=(x^{(1)},\dots,x^{(k)})$ consists of $k$ blocks of $\ell$ bits and
$\ell$ is odd. The construction in~\cite{CHLLTV26} uses a combinatorial
design computed in time polynomial in the number of outputs. We have exponentially
many outputs, so we substitute a polynomial-based design in which each set can
be computed locally. The analysis is unchanged, and we include it for
completeness.

\begin{lemma}\label{lem:prg}
There is a polynomial $s$ and, for each $T\ge 1$, a map
$G_T:\{0,1\}^{s(T)}\to\{0,1\}^{\{0,1\}^{\le T}}$ such that
\begin{enumerate}
\item given $T$, $z\in\{0,1\}^{s(T)}$ and $w$ with $|w|\le T$, the bit
$G_T(z)_w$ can be computed in time $\poly(T)$, and
\item for every polynomial $p$ over $\F$ in the variables
$\{X_w:|w|\le T\}$ of degree at most $T$,
\[
\Big|\Pr_z[p(G_T(z))=1]-\Pr_{Y}[p(Y)=1]\Big|\le \frac{1}{10},
\]
where $Y$ is uniform on $\{0,1\}^{\{0,1\}^{\le T}}$.
\end{enumerate}
\end{lemma}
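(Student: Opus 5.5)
We use the Nisan--Wigderson paradigm with $\Maj$ on $\ell$ bits as the hard function. The correlation bound \eqref{eq:corr} plays the role of hardness, and a polynomial-based design replaces the greedy design.

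\textbf{Parameters and design.} Let $N=2^{T+1}-1$ be the number of outputs. Fix a field $\GF(q)$ with $q$ a power of two, $q=\Theta(T^{c})$ for a suitable constant $c$. Take odd $\ell\approx q$; more precisely, identify $\ell$ of the field elements with block positions. The seed is a function $z:\GF(q)\times\GF(q)\to\{0,1\}$, so $s(T)=q^2=\poly(T)$. Encode each $w$ with $|w|\le T$ injectively as a polynomial $f_w$ over $\GF(q)$ of degree less than $D$, where $q^{D}\ge N$; since $q=\poly(T)$, $D=O(T/\log T)$ suffices. The set is $S_w=\{(\alpha,f_w(\alpha)):\alpha\in E\}$, where $E$ is a fixed set of $\ell$ field elements. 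Define $G_T(z)_w=\Maj(z|_{S_w})$. Computing $f_w$ and evaluating it at $\ell$ points takes $\poly(T)$ time, which gives item~1. Distinct $w,w'$ satisfy $|S_w\cap S_{w'}|\le D-1$.

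\textbf{Hybrid argument.} Order the outputs $w_1,\dots,w_N$. The distinguishing advantage of a degree-$T$ polynomial $p$ is at most the sum of the hybrid gaps. A single step could lose up to $1/N$, which is far too much, so we cannot fix the hybrid index and apply a correlation bound for a single majority. We follow \cite[Theorem A.1]{CHLLTV26} and bound the Fourier/XOR structure directly. Write $\Pr[p=1]=\tfrac12-\tfrac12\mathbb{E}(-1)^{p}$ and expand $(-1)^{p(Y)}$ for uniform $Y$ versus $Y=G_T(z)$. Equivalently, use the XOR lemma form. The advantage against $p$ is bounded by $\sum_{\emptyset\ne A}|\hat{\cdot}|$ terms of the shape $\mathbb{E}_z(-1)^{p'(z)+\bigoplus_{w\in A}\Maj(z|_{S_w})}$.

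I would instead do the following. Replace $Y$ coordinate by coordinate in blocks. Fixing the seed bits outside $S_w$ makes each other output $G_T(z)_{w'}$ a function of at most $D-1$ bits of $S_w$, hence a polynomial of degree at most $D-1$. Then $p(\dots)$ restricted to $z|_{S_w}$ is a polynomial of degree at most $T(D-1)$. The number of terms of $p$ is irrelevant; only its degree matters. This is the key point that makes exponentially many variables harmless.

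Applying \eqref{eq:corr} with $k=1$ gives a per-output correlation of $2T D/\sqrt{\ell}$. This still must be summed over $N$ outputs, which fails. So the main obstacle is avoiding the union over $N=2^{T+1}$ hybrids. The fix, which is what the $k$-fold XOR bound is for, is as follows. A degree-$T$ polynomial $p$ in the $Y$ variables distinguishes the distributions only through its interaction with the substituted bits. Write $(-1)^{p(Y)}$ where each $Y_w=\Maj(x^{(w)})$ is a function of independent blocks in the truly random hybrid. Now compare with the generator, grouping the outputs into the at most $T$ outputs appearing in each monomial. Each monomial involves at most $T$ outputs, and its Fourier expansion over $\pm1$ involves XORs of at most $T$ majorities.

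I would therefore aim to show
\[
\mathbb{E}_z(-1)^{p(G_T(z))}=\mathbb{E}_Y(-1)^{p(Y)}\pm \sum_{k\ge1}\binom{\cdot}{\cdot}\Big(\frac{2TD}{\sqrt\ell}\Big)^k.
\]
Here the counting is controlled by the design. We choose $\ell\ge (40TD)^2\cdot\poly(T)$ so that each term decays geometrically fast enough to beat the count, giving total error at most $1/10$. This still keeps $\ell$, and hence $q$ and $s(T)$, polynomial in $T$. Verifying that this counting matches the analysis of \cite[Theorem A.1]{CHLLTV26}, with our design intersections $D-1$ in place of theirs, is the step I expect to require the most care. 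I would take their argument verbatim and check only that it uses nothing about the design beyond pairwise intersection size and the number of sets per seed coordinate.
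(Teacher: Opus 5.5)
There is a genuine gap, and it sits in the definition of the generator. Your hybrid step is set up correctly: fixing the seed bits outside $S_{w_i}$ makes each earlier output a polynomial of degree at most $D-1$, so only the degree of $p$ matters. But with $G_T(z)_w=\Maj(z|_{S_w})$, the freshly randomized output is a single majority. So \eqref{eq:corr} can only be used with $k=1$, and the loss per step is of order $TD/\sqrt\ell$. Summed over $N=2^{T+1}$ steps, this forces $\ell$, and hence the seed, to be exponential.

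Your proposed repair reads the $k$ in \eqref{eq:corr} as the number of \emph{different outputs} in a Fourier term. That is not what the bound says. It controls the correlation of a low-degree polynomial with the XOR of majorities of $k$ \emph{disjoint, independent} blocks. The terms $\bigoplus_{w\in A}\Maj(z|_{S_w})$ in your expansion are majorities of overlapping design sets, entangled with the other outputs through the shared seed, and \eqref{eq:corr} says nothing about them. In addition, the counting is left as $\binom{\cdot}{\cdot}$, and nothing controls the Fourier expansion of $(-1)^{p}$; its Fourier $\ell_1$-norm can be exponential in $N$ already for degree $2$. So the displayed inequality is not yet an argument. Importing the argument of \cite{CHLLTV26} ``verbatim'' would not help either, since that argument is for an XOR-of-majorities generator, not for yours.

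The missing idea is to put the XOR into the generator itself. Take $k=\lceil\log(2N/\varepsilon)\rceil=O(T)$ independent seed blocks $z^{(1)},\dots,z^{(k)}$, all indexed by the same design, and output $G_T(z)_w=\bigoplus_{a=1}^k\Maj(z^{(a)}|_{S_w})$. In hybrid step $i$, fix all seed bits outside $S_{w_i}$ in every block. Then:
\begin{itemize}
\item output $i$ becomes exactly $\bigoplus_a\Maj(x^{(a)})$ on $k$ independent $\ell$-bit blocks;
\item each earlier output is an XOR over $a$ of functions of at most $D-1$ free variables, so it still has degree at most $D-1$, and the substituted polynomial has degree less than $TD$;
\item \eqref{eq:corr} with $\sqrt\ell\ge 4TD$ bounds the step by $\tfrac12(2TD/\sqrt\ell)^k\le 2^{-k-1}\le\varepsilon/(4N)$.
\end{itemize}
So you never avoid the union over the $N$ hybrids; you make each term exponentially small. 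The price is a factor $k=O(T)$ in the seed length, which stays polynomial. Your design and your argument for item~1 are fine as they stand.
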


\begin{proof}
Let $N=2^{T+1}-1$ be the number of strings of length at most $T$, and set
$d=T$, $t=T+1$, $\varepsilon=1/10$, $k=\lceil\log(2N/\varepsilon)\rceil\le T+6$.
Let $\ell$ be the least odd integer with $\ell\ge 16d^2t^2$, and let $q$ be the
least power of two with $q\ge\ell$, so $q<2\ell=O(T^4)$.

\emph{Design.} Fix a set $A\subseteq\GF(q)$ of size $\ell$, ordered. For a
string $w$ with $|w|\le T$, write the string $1w$, padded on the left with
zeros to length $T+1$, as $(c_0,\dots,c_T)\in\{0,1\}^{T+1}\subseteq\GF(q)^{T+1}$
and let $f_w(\alpha)=\sum_{i} c_i\alpha^i$. Distinct $w$ give distinct
polynomials of degree at most $T$. Set
$S_w=\{(\alpha,f_w(\alpha)):\alpha\in A\}\subseteq\GF(q)^2$. Then $|S_w|=\ell$,
and for $w\ne v$ we have $|S_w\cap S_v|\le T<t$ since $f_w-f_v$ is a nonzero
polynomial of degree at most $T$.

\emph{Generator.} The seed consists of $k$ blocks
$z^{(1)},\dots,z^{(k)}\in\{0,1\}^{\GF(q)^2}$, so $s(T)=kq^2=O(T^9)$. Define
\[
G_T(z)_w=\bigoplus_{a=1}^k \Maj\big(z^{(a)}|_{S_w}\big).
\]
Computing $S_w$ takes $\poly(T)$ field operations, so item~1 holds.

\emph{Analysis.} Order the strings of length at most $T$ as
$w_1,\dots,w_N$, and let $Y_i$ consist of the first $i$ outputs of $G_T$
followed by $N-i$ independent uniform bits. Fix $p$ of degree at most $d$
and $i$. Fix all seed bits outside $S_{w_i}$ in every block, together with the
last $N-i$ bits. The remaining seed bits form uniform blocks
$x^{(1)},\dots,x^{(k)}\in\{0,1\}^\ell$, and output $i$ is
$F(x)=\bigoplus_a\Maj(x^{(a)})$. For $j<i$, output $j$ is an XOR over $a$ of
functions of $x^{(a)}|_{S_{w_j}\cap S_{w_i}}$, each depending on fewer than
$t$ variables, so it is a polynomial of degree less than $t$. For
$b\in\{0,1\}$, let $Q_b(x)$ be $p$ with these outputs, the bit $b$ in position
$i$, and the fixed later bits substituted; then $\deg Q_b< dt$. Viewing bits as
reals, the conditional difference between the acceptance probabilities under
$Y_i$ and $Y_{i-1}$ is
\[
\Big|\mathbb{E}_x\Big[Q_{F(x)}(x)-\tfrac{Q_0(x)+Q_1(x)}{2}\Big]\Big|
=\tfrac14\Big|\mathbb{E}_x\big[(-1)^{F(x)+Q_1(x)}-(-1)^{F(x)+Q_0(x)}\big]\Big|
\le \tfrac12\Big(\frac{2dt}{\sqrt\ell}\Big)^k\le 2^{-k-1}\le\frac{\varepsilon}{4N},
\]
by~\eqref{eq:corr} and $\sqrt\ell\ge 4dt$. Averaging over the fixed bits and
summing over the $N$ hybrids bounds the total difference by $\varepsilon/4$.
\end{proof}

\section{Almost-\texorpdfstring{$\oplus$}{Parity}P}

\begin{lemma}[Degree]\label{lem:degree}
Let $M$ be a nondeterministic oracle machine running in time $T(n)$. For every
$x$ there is a polynomial $p_x$ over $\F$ in the variables
$\{X_w:|w|\le T(|x|)\}$ of degree at most $T(|x|)$ such that $M^R(x)=p_x(R)$
for every oracle $R$.
\end{lemma}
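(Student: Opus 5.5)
The plan is to write the parity of accepting paths as an $\F$-sum over all potential computation paths, each weighted by an indicator that the path is consistent with the oracle, and to observe that each such indicator is a product of at most $T(|x|)$ linear factors. Fix $x$ and write $T=T(|x|)$. Since $M$ is clocked, every run on $x$ makes at most $T$ nondeterministic choices and at most $T$ oracle queries, each of length at most $T$. So only the variables $X_w$ with $|w|\le T$ can appear.

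The key step is to remove the adaptivity of the queries, since which strings are queried on a given path depends on the answers to earlier queries. I would enumerate pairs $(c,a)$, where $c\in\{0,1\}^{T}$ is a sequence of nondeterministic choices and $a\in\{0,1\}^{\le T}$ is a sequence of assumed oracle answers. For each such pair, simulate $M$ on $x$ deterministically. Use $c$ to resolve the nondeterministic choices, and answer the $i$-th query with $a_i$. This simulation ignores the true oracle entirely. Discard the pair if the run asks more or fewer than $|a|$ queries, or if it does not accept. For each surviving pair, let $w_1,\dots,w_m$ be the queried strings, where $m=|a|\le T$, and set
\[
\pi_{c,a}(X)=\prod_{i=1}^{m}\bigl(X_{w_i}+a_i+1\bigr).
\]
Evaluated at $R$, this product equals $1$ exactly when $R(w_i)=a_i$ for all $i$. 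Then define $p_x=\sum_{(c,a)}\pi_{c,a}$ over $\F$. Each summand has degree at most $m\le T$, so $\deg p_x\le T$. If a string is queried twice, the factors repeat; I would multilinearize using $X^2=X$, which does not change the values on Boolean inputs and does not raise the degree.

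Next I would check that $p_x(R)=M^R(x)$ for every oracle $R$. For a fixed $R$, every accepting path of $M^R(x)$ is determined by its choice sequence $c$. Along that path, the true answer sequence $a$ is then determined as well. Hence the pair $(c,a)$ survives the filtering and satisfies $\pi_{c,a}(R)=1$. Conversely, suppose a surviving pair $(c,a)$ has $\pi_{c,a}(R)=1$. Then the assumed answers agree with $R$ on every query, so by induction on the steps the simulation coincides with the real run of $M^R$ on choices $c$, and that run accepts. This gives a bijection between accepting paths of $M^R(x)$ and pairs with $\pi_{c,a}(R)=1$, so $p_x(R)$ is their number mod $2$, which is $M^R(x)$.

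The only real obstacle is this bookkeeping of adaptivity: the argument must ensure that each accepting path is counted exactly once, which is why $a$ is the full answer sequence rather than a set of queries. A minor point is what counts as a path when the number of nondeterministic steps varies. I would normalize $M$ so that every path makes exactly $T$ choices, or else index $c$ by the actual choice sequence; either way the counting is unaffected.
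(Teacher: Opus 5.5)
Your proposal is correct and is essentially the paper's argument. For each choice sequence, the paper views the run as a decision tree of depth at most $T$ in the oracle bits and sums, over accepting leaves, the product of literals $X_w$ or $1+X_w$ along the branch; your answer sequences $a$ are exactly those branches, and your bijection check spells out the paper's remark that at most one leaf is reached for each oracle.
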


\begin{proof}
Let $T=T(|x|)$. Fix a sequence $y\in\{0,1\}^T$ of nondeterministic choices.
The computation along $y$ is a decision tree of depth at most $T$ in the oracle
bits. The indicator of reaching a given accepting leaf is a product of at most
$T$ literals $X_w$ or $1+X_w$, and at most one leaf is reached, so the indicator
that path $y$ accepts is the $\F$-sum of these products. Summing over all $y$
gives $p_x$.
\end{proof}

\begin{lemma}[Uniformization]\label{lem:unif}
Let $M$ be a $\pP$ oracle machine with
$\mu\{R: L(M^R)=L\}>0$. Then there is a $\pP$ oracle machine $M'$ with
$\Pr_R[M'^R(x)=L(x)]\ge 0.99$ for every $x$.
\end{lemma}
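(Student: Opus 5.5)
This is the standard Bennett--Gill style argument: combine Lebesgue density with random relabeling of the oracle.

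\emph{Step 1: density.} Let $E=\{R: L(M^R)=L\}$, which is measurable with $\mu(E)>0$. By the Lebesgue density theorem there is a finite string $\sigma$ with $\mu(E\cap[\sigma])\ge 0.999\,\mu([\sigma])$. Let $M_1$ be the machine that simulates $M$, answering queries to strings among the first $|\sigma|$ positions from the hardwired $\sigma$ and all other queries from the oracle. Then for a uniformly random $R$, $\Pr_R[L(M_1^R)=L]\ge 0.999$, and in particular $\Pr_R[M_1^R(x)=L(x)]\ge 0.999$ for each fixed $x$. This is still not the statement, because we need correctness on every $x$ with probability $0.99$ without the machine "knowing" the good oracle region. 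The per-$x$ bound is already exactly what is required, however. So $M'=M_1$ works, and $M_1$ is a $\pP$ oracle machine because $\sigma$ is a finite constant.

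\emph{Step 2: checking the subtlety.} The only point to verify is that Lebesgue density applies in Cantor space with cylinders. The martingale/density theorem gives that for almost every $R\in E$, $\mu(E\cap[R\restriction n])/\mu([R\restriction n])\to 1$, so a suitable $\sigma$ exists.

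\emph{Step 3: the shift.} We must also make sure that conditioning on $[\sigma]$ corresponds to the uniform distribution on the remaining bits. It does, since the measure is a product measure: $R'\mapsto \sigma R'_{>|\sigma|}$ pushes uniform $R'$ to uniform on $[\sigma]$. Hence $\Pr_{R'}[L(M_1^{R'})=L]=\mu(E\cap[\sigma])/\mu([\sigma])\ge 0.999$.

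The main obstacle is simply the density step. Everything else is bookkeeping, and no use of Fact~\ref{fact:pz} or the generator is needed here.
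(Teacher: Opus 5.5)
Your proposal is correct and follows the paper's proof. Both find a cylinder $[\sigma]$ on which $\{R: L(M^R)=L\}$ has relative measure at least $0.99$, hardwire $\sigma$, and use that $R\mapsto R_\sigma$ pushes the uniform measure to the uniform measure on $[\sigma]$, so the per-$x$ bound follows from the global one. The only difference is how you find $\sigma$: you invoke the Lebesgue density theorem in Cantor space (L\'evy's zero-one law), whereas the paper uses a more elementary outer-regularity argument, covering the set by an open set of measure at most $\mu(S)/0.99$, splitting that open set into disjoint cylinders, and averaging.
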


\begin{proof}
Let $S=\{R:L(M^R)=L\}$, an intersection of clopen sets and hence measurable.
By outer regularity there is an open $O\supseteq S$ with
$\mu(O)\le\mu(S)/0.99$. Write $O$ as a disjoint union of cylinders
$[\sigma_i]$. Since $\sum_i\mu(S\cap[\sigma_i])=\mu(S)\ge 0.99\sum_i\mu([\sigma_i])$,
some $\sigma=\sigma_i$ has $\mu(S\cap[\sigma])\ge 0.99\,\mu([\sigma])$.

For an oracle $R$ let $R_\sigma$ agree with $\sigma$ on the first $|\sigma|$
strings and with $R$ elsewhere. Let $M'^R$ simulate $M^{R_\sigma}$, answering
queries among the first $|\sigma|$ strings from a hard-wired table. If $R$ is
uniform then $R_\sigma$ is uniform on $[\sigma]$, so
$\Pr_R[L(M'^R)=L]=\mu(S\cap[\sigma])/\mu([\sigma])\ge 0.99$, and in particular
$\Pr_R[M'^R(x)=L(x)]\ge 0.99$ for each $x$.
\end{proof}

\begin{proposition}\label{prop:hard}
$\Almost\pP\subseteq\BP\pP$.
\end{proposition}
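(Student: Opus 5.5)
Let $L\in\Almost\pP$. Since there are only countably many $\pP$ oracle machines and $\Pr_R[L\in\pP^R]=1$, countable additivity gives a single machine $M$ with $\mu\{R:L(M^R)=L\}>0$. By Lemma~\ref{lem:unif} there is then a $\pP$ oracle machine $M'$, running in some polynomial time $T(n)$, with $\Pr_R[M'^R(x)=L(x)]\ge 0.99$ for every $x$. The plan is to derandomize the oracle with the generator of Lemma~\ref{lem:prg}: replace the random oracle $R$ by $G_T(z)$ for a random seed $z$ of polynomial length, and show that the resulting predicate on $(x,z)$ is in $\pP$.

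Fix $x$ and put $T=T(|x|)$. By Lemma~\ref{lem:degree}, $M'^R(x)=p_x(R)$ for a polynomial $p_x$ over $\F$ of degree at most $T$ in the variables $\{X_w:|w|\le T\}$. Only these variables matter because the machine is clocked. Let $Y$ be uniform on $\{0,1\}^{\{0,1\}^{\le T}}$, so that $\Pr_Y[p_x(Y)=L(x)]\ge 0.99$. Either $p_x$ or $1+p_x$ is the relevant event, and both have degree at most $T$, so item~2 of Lemma~\ref{lem:prg} gives
\[
\Pr_{z}[p_x(G_T(z))=L(x)]\ge 0.99-0.1\ge 2/3,
\]
where $z$ is uniform on $\{0,1\}^{s(T)}$.

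It remains to show that $A=\{(x,z): M'^{G_{T(|x|)}(z)}(x)=1\}$ is in $\pP$. A $\pP$ machine on input $(x,z)$ simulates $M'$ on $x$. Whenever $M'$ queries a string $w$, which has $|w|\le T$, the simulation computes the answer $G_T(z)_w$ deterministically in time $\poly(T)$, using item~1 of Lemma~\ref{lem:prg}. The number of accepting paths is unchanged, so its parity equals $M'^{G_T(z)}(x)$. Taking $q(n)=s(T(n))$, this gives $L\in\BP\pP$.

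The only delicate points are the first step, where one must pick a single machine from a positive-measure event; countability of the machines handles this. The real weight is carried by Lemma~\ref{lem:prg}, specifically its locality (item~1), which lets the oracle be evaluated inside the $\pP$ computation even though $G_T$ has exponentially many outputs. Given those earlier lemmas, this proposition is a direct assembly.
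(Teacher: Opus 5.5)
Your proposal is correct and matches the paper's proof step for step. You pick a positive-measure machine by countability, apply Lemma~\ref{lem:unif}, bound the degree via Lemma~\ref{lem:degree}, and apply Lemma~\ref{lem:prg} to $p_x$ or $1+p_x$ (the paper writes this as $p_x+L(x)+1$). Finally, as in the paper, you place $\{(x,z): p_x(G_{T(|x|)}(z))=1\}$ in $\pP$ by simulating $M'$ and answering each query with the locally computed $G_T(z)_w$.
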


\begin{proof}
Let $L\in\Almost\pP$. There are countably many $\pP$ oracle machines, so one
of them, $M$, has $\mu\{R:L(M^R)=L\}>0$. Let $M'$ be given by
Lemma~\ref{lem:unif}, with running time $T(n)=n^c$.

Fix $x$ of length $n$ and let $T=T(n)$. By Lemma~\ref{lem:degree},
$M'^R(x)=p_x(R)$ with $\deg p_x\le T$, and $p_x+L(x)+1$ also has degree at
most $T$. Since $R$ restricted to $\{0,1\}^{\le T}$ is uniform,
Lemma~\ref{lem:prg} gives
\[
\Pr_z\big[p_x(G_T(z))=L(x)\big]\ge \Pr_R\big[M'^R(x)=L(x)\big]-\tfrac1{10}\ge 0.89.
\]
Let $B=\{(x,z): p_x(G_{T(|x|)}(z))=1\}$. Then $B\in\pP$: simulate $M'$ on $x$,
and answer each query $w$ by computing $G_{T(|x|)}(z)_w$ in polynomial time.
Choosing $z$ uniformly from $\{0,1\}^{s(T(|x|))}$ puts $L$ in $\BP\pP$.
\end{proof}

\begin{proposition}\label{prop:easy}
$\BP\pP\subseteq\Almost\pP$.
\end{proposition}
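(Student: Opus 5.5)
Take $L\in\BP\pP$ and use the random oracle $R$ in place of the random string $r$. By the remark after Fact~\ref{fact:pz}, we may take $A\in\pP$ and a polynomial $q$ with
\[
\Pr_{r\in\{0,1\}^{q(|x|)}}[A(x,r)\ne L(x)]\le 2^{-2|x|}
\]
for every $x$. Define the oracle machine $M$ as follows: on input $x$ of length $n$, it deterministically queries $R$ on $q(n)$ fixed distinct strings, say the strings $x0^i1$ for $i=1,\dots,q(n)$. It assembles the answers into $r_x\in\{0,1\}^{q(n)}$ and then runs the $\pP$ machine for $A$ on $(x,r_x)$. This is a polynomial-time nondeterministic oracle machine whose parity equals $A(x,r_x)$. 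The oracle queries are made before any nondeterministic branching, so the number of accepting paths is exactly that of the machine for $A$ on $(x,r_x)$. Hence $L(M^R)$ is a language in $\pP^R$.

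Because the query strings are fixed and distinct, $r_x$ is uniform when $R$ is random, so $\Pr_R[M^R(x)\ne L(x)]\le 2^{-2n}$. Summing over the $2^n$ inputs of length $n$ gives
\[
\Pr_R[\exists x\in\{0,1\}^n: M^R(x)\ne L(x)]\le 2^{-n},
\]
and $\sum_n 2^{-n}<\infty$. By Borel--Cantelli, with probability one $M^R$ errs on only finitely many $x$.

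The only real obstacle is to get probability exactly one rather than merely high probability. Since the inputs with $|x|=n$ may share query strings with one another, I do not rely on independence across inputs. The union bound above needs none, and Borel--Cantelli requires no independence either. What remains is the finitely many exceptional inputs. For each $R$ in the probability-one set, $L$ differs from $L(M^R)$ on a finite set $F_R$. Patching $M$ with a finite table that lists $F_R$ gives a $\pP^R$ machine deciding $L$ exactly, since $\pP^R$ is closed under finite variations. So $L\in\pP^R$ for almost every $R$, which is $L\in\Almost\pP$.

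The patch depends on $R$, but this is allowed: the definition of $\Almost\pP$ only asks that $L\in\pP^R$ for almost every $R$, not that a single machine works for all of them.
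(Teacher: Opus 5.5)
Your proposal is correct and matches the paper's proof: amplify to error $2^{-2|x|}$, read the random string from fixed oracle positions depending on $x$, apply Borel--Cantelli, and patch the finitely many errors with a table. The only cosmetic difference is that you union-bound within each length before applying Borel--Cantelli, where the paper sums $\Pr_R[M^R(x)\ne L(x)]$ directly over all $x$; the two are equivalent.
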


\begin{proof}
Let $L\in\BP\pP$, amplified so that $A\in\pP$ and
$\Pr_r[A(x,r)\ne L(x)]\le 2^{-2|x|}$ with $|r|=q(|x|)$. Let $M^R$ on $x$ read
$r_i=R(\langle x,i\rangle)$ for $i\le q(|x|)$ deterministically and then run
the $\pP$ machine for $A(x,r)$. Then
$\sum_x\Pr_R[M^R(x)\ne L(x)]\le\sum_n 2^n2^{-2n}<\infty$, so by the
Borel--Cantelli lemma, with probability one $M^R$ errs on only finitely many
inputs. Patching those inputs with a finite table gives $L\in\pP^R$.
\end{proof}

Propositions~\ref{prop:hard} and~\ref{prop:easy}, together with
$\BP\pP=\BPP^{\pP}$, prove Theorem~\ref{thm:main}.

\begin{remark}
Only Proposition~\ref{prop:hard} is used below, and its proof uses nothing
from Toda's paper. The same argument relativizes: for every oracle $A$,
$\{L:\Pr_R[L\in\pP^{A\oplus R}]=1\}=\BP\pP^A$, since Lemma~\ref{lem:degree}
bounds the degree in the $R$-variables for each fixed $A$.
\end{remark}

\section{The polynomial hierarchy relative to a random oracle}

Let $\Sigma_0^R=\cP^R$ and $\Sigma_{k+1}^R=\NP^{\Sigma_k^R}$. Each language in
$\Sigma_k^R$ is given by a finite \emph{specification}: a tower
$\Pi=(N_k,\dots,N_1,N_0)$ of clocked oracle machines, with $N_0$ deterministic,
$N_{i}$ nondeterministic with oracle $L_{(N_{i-1},\dots,N_0)}(R)$ for
$i\ge 1$, and $N_0$ with oracle $R$. We write $L_\Pi(R)$ for the language
defined. Since $\Sigma_k^R$ is closed under join with $R$, we may assume every
$N_i$ has access to $R$ through its oracle. Composing the time bounds, there is
a polynomial $p_\Pi$ such that whether $x\in L_\Pi(R)$ depends only on $R$
restricted to strings of length at most $p_\Pi(|x|)$.

\begin{theorem}[cf.~Regan--Royer~\cite{RR95}]\label{thm:ph}
For every $k$ and every specification $\Pi$ at level $k$, we have
$\Pr_R[L_\Pi(R)\in\pP^R]=1$. Consequently,
$\Pr_R[\PH^R\subseteq\pP^R]=1$.
\end{theorem}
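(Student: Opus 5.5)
We argue by induction on $k$ that for every specification $\Pi$ at level $k$ there is a single $\pP$ oracle machine $M_\Pi$ such that $\Pr_R[M_\Pi^R(x)\ne L_\Pi(R)(x)]\le 2^{-2|x|}$ for every $x$. As in the proof of Proposition~\ref{prop:easy}, Borel--Cantelli then shows that with probability one $M_\Pi^R$ errs on only finitely many inputs. Patching these inputs with a finite table gives $L_\Pi(R)\in\pP^R$ almost surely. There are countably many specifications, so the union of the exceptional null sets is null, and this yields $\Pr_R[\PH^R\subseteq\pP^R]=1$.

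For $k=0$ there is nothing to prove, since $\cP^R\subseteq\pP^R$. For the inductive step take $\Pi=(N_k,\dots,N_0)$ and write $\Pi'=(N_{k-1},\dots,N_0)$. By induction we have a $\pP$ machine $M_{\Pi'}$ with error at most $2^{-2|y|}$ on every input $y$. We first turn the outer nondeterministic machine into a parity computation using Fact~\ref{fact:vv}. On input $x$, $N_k$ guesses a witness $y\in\{0,1\}^m$ with $m=\poly(|x|)$, and $W$ is the set of accepting witnesses given oracle $L_{\Pi'}(R)$. Choose the hash $(j,h)$ not by coin flips but from fresh oracle bits $R(\langle x,\Pi,i,c\rangle)$ under a tagging that makes the seeds for different $i$ disjoint from each other and from all other queries. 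Repeat with $u=O(m\cdot|x|)$ independent seeds $i$. For each $i$ the machine computes the parity of $|\{y\in W:h_i(y)=0\}|$, which is a $\pP^{L_{\Pi'}(R)}$ computation. It then takes the OR of the $u$ results, which a $\cP$ machine with a $\pP$ oracle can do.

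If $W=\emptyset$ every parity is $0$. If $W\neq\emptyset$, by Fact~\ref{fact:vv} each trial isolates a unique witness with probability at least $1/(8(m+1))$, so all $u$ trials fail with probability at most $2^{-2|x|-1}$. Fact~\ref{fact:pz} then places the whole procedure in $\pP^{L_{\Pi'}(R)\oplus R}$, since Fact~\ref{fact:pz} relativizes to the oracle $L_{\Pi'}(R)\oplus R$.

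The main obstacle is replacing the oracle $L_{\Pi'}(R)$ by the simulation $M_{\Pi'}^R$ without losing control of the errors. Two issues arise: the simulation is correlated with the hashing bits, and there can be exponentially many queries along nondeterministic paths. We handle this by giving the inner level a stronger error bound. The induction hypothesis is invoked with the bound replaced by $2^{-r(|y|)}$ for a polynomial $r$ to be chosen, which amplification through Fact~\ref{fact:pz} supports. We also make queries to the hashing region disjoint from the inner machine's queries, which the tagging guarantees.

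The key point is that we never union-bound over exponentially many paths. Instead we union-bound over the at most $2^{p(|x|)}$ strings $y$ of length at most $p(|x|)$ that any path could query, where $p$ is the running time of $N_k$. The inner simulation is wrong on some such $y$ with probability at most $2^{p(|x|)+1}\cdot 2^{-r}$, computed over all $y$ up to that length. Choosing $r(n)\ge p(n)+2n+2$ makes this at most $2^{-2|x|-1}$. Outside this bad event $M_{\Pi'}^R$ agrees with $L_{\Pi'}(R)$ on every relevant query, so the composed machine is a single $\pP^R$ machine with total error at most $2^{-2|x|}$. This closes the induction.
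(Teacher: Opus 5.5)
Your route differs from the paper's, and it breaks exactly at the point where they differ. The paper's induction carries only the qualitative hypothesis $\Pr_R[K^R\in\pP^R]=1$. For each fixed $R$ in that event, $K^R$ is \emph{exactly} a $\pP^R$ language, so Fact~\ref{fact:pz} gives $V^R\in\cP^{\pP^{K^R\oplus R}}\subseteq\pP^R$ with no error inherited from lower levels. Both issues you list as the main obstacle disappear there.

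You instead carry a uniform quantitative hypothesis and push the inner errors through a union bound, and that bound is wrong. The inner error $2^{-r(|y|)}$ depends on the \emph{query} length. But $N_k$ on input $x$ may query strings much shorter than $x$, down to the empty string. The correct sum is $\sum_{j\le p(|x|)}2^{j}\,2^{-r(j)}$, which is at least its $j=0$ term $2^{-r(0)}$ and does not tend to $0$. Your estimate $2^{p(|x|)+1}\cdot 2^{-r}$ silently evaluates $r$ at $|x|$.

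This is a real failure of the construction, not just a loose estimate. Suppose $N_k$ accepts $x$ exactly when the empty string is in its oracle. Consider the event that $M_{\Pi'}^R$ errs on the empty string; its probability does not depend on $x$ and is in general positive for $k\ge 2$. On that event your $M_\Pi^R$ errs on essentially every $x$. So neither the per-input bound $2^{-2|x|}$ nor the Borel--Cantelli and finite-patch step survives.

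To repair your route, decouple the error from the query length. Prove $\Pr_R[M_\Pi^R(x,1^t)\neq L_\Pi(R)(x)]\le 2^{-t}$ for all $x$ and $t$ using $O(mt)$ hash trials, and have the outer machine query $(y,1^{t+p(|x|)+2})$. Note that ``amplification through Fact~\ref{fact:pz}'' alone does not achieve this, since repeated runs of the same machine read the same oracle bits and are not independent. Alternatively, switch to the paper's pointwise-in-$R$ argument, which needs no union bound over queries at all.

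There is a second gap in the hashing. Reading the seeds from $R(\langle x,\Pi,i,c\rangle)$ does not keep them out of the tower's reach. The machines $N_k,\dots,N_0$ are arbitrary, and these strings have length $O(|x|)$, so a machine in the tower with a large enough polynomial running time can query them. Then $W$ may depend on the hash bits, and Fact~\ref{fact:vv} no longer applies. No tagging convention can prevent this. You need the paper's device of reading the seeds from strings of length greater than $p_\Pi(|x|)$. Conditioning on $R$ up to that length then fixes $W$ and leaves the seeds uniform.
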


\begin{proof}
The consequence follows because there are countably many specifications.
We prove the first statement by induction on $k$. For $k=0$,
$\cP^R\subseteq\pP^R$ for every $R$.

Let $\Pi=(N,\Pi')$ be at level $k+1$, so $K^R=L_{\Pi'}(R)\in\Sigma_k^R$ and
$L_\Pi(R)=L(N^{K^R})$. Write $N$ as a deterministic verifier: $x\in L_\Pi(R)$
if and only if $W_x(R)\ne\emptyset$, where
$W_x(R)=\{y\in\{0,1\}^m: N^{K^R}(x,y)\text{ accepts}\}$ and $m=m(|x|)$ is a
polynomial. Let $p=p_\Pi$, so $W_x(R)$ depends only on $R$ restricted to
strings of length at most $p(|x|)$.

\emph{Hashing from far away.} Let $s(n)=16(m+1)n$. For $i\le s(|x|)$, read
the random choices $(j_i,h_i)$ of Fact~\ref{fact:vv} from the bits of $R$ at
the strings $1^{p(|x|)+1}0\langle x,i,b\rangle$, $b=1,2,\dots$; all of these
have length greater than $p(|x|)$. Define
\[
V^R(x)=\bigvee_{i\le s(|x|)}\Big[\,\big|\{y\in W_x(R):h_i(y)=0\}\big|\text{ is odd}\,\Big].
\]
For every $R$ the language $V^R$ is in $\cP^{\pP^{K^R\oplus R}}$: each
bracket is a $\pP^{K^R\oplus R}$ query and a polynomial-time machine takes
the OR.

\emph{Error.} If $W_x(R)=\emptyset$ then $V^R(x)=0$. Condition on $R$
restricted to strings of length at most $p(|x|)$; this fixes $W_x(R)$ and
leaves the hash bits uniform and independent. If $W_x(R)\ne\emptyset$, each
$i$ isolates a unique element with probability at least $1/(8(m+1))$, so
\[
\Pr_R[V^R(x)\ne L_\Pi(R)(x)]\le\Big(1-\frac{1}{8(m+1)}\Big)^{s(|x|)}\le
e^{-2|x|}\le 2^{-2|x|}.
\]
By the Borel--Cantelli lemma, with probability one $V^R$ and $L_\Pi(R)$
differ on only finitely many inputs.

\emph{Collapse.} By induction, with probability one $K^R\in\pP^R$. For
every $R$ in both probability-one events, Fact~\ref{fact:pz} gives
$V^R\in\cP^{\pP^{\pP^R}}=\pP^R$, and $L_\Pi(R)$ is a finite variation of
$V^R$, so $L_\Pi(R)\in\pP^R$.
\end{proof}

\begin{remark}
Reading the hash functions from strings longer than $p(|x|)$ is what makes the
argument work: an adversarial $\Sigma_{k+1}^R$ machine may query any string it
can reach, and the Valiant--Vazirani analysis needs the hash functions to be
independent of the witness set.
\end{remark}

\section{Toda's theorem}

\begin{corollary}\label{cor:toda}
$\PH\subseteq\BP\pP$, and hence $\PH\subseteq\cP^{\sP}$.
\end{corollary}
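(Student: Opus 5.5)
The plan is to chain Theorem~\ref{thm:ph} with Proposition~\ref{prop:hard}.

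Fix $L\in\PH$, say $L\in\Sigma_k$. Since $\Sigma_k\subseteq\Sigma_k^R$ for every $R$ (simply ignore the oracle), there is a specification $\Pi$ at level $k$ with $L_\Pi(R)=L$ for every oracle $R$. Concretely, take the unrelativized tower of machines for $L$ and let each machine make no queries to $R$. Theorem~\ref{thm:ph} then gives $\Pr_R[L\in\pP^R]=1$, that is, $L\in\Almost\pP$. Proposition~\ref{prop:hard} now yields $L\in\BP\pP$. Only this direction of Theorem~\ref{thm:main} is used, so, as the remark after Proposition~\ref{prop:easy} notes, nothing from Toda's paper enters.

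For the second half, $\BP\pP\subseteq\cP^{\sP}$, I would cite Toda's counting step. Amplify as in the preliminaries so that $A\in\pP$ and $\Pr_r[A(x,r)\ne L(x)]\le 2^{-|x|}$, with $|r|=q(|x|)$. Then apply the modulus-amplifying polynomial: if a $\pP$ computation has accepting-path count $c$, then $3c^4+4c^3$ is congruent to $0$ modulo $2^{2^i}$ when $c$ is even, and to $-1$ when $c$ is odd. Choosing $i$ with $2^{2^i}>2^{q(|x|)}$, the construction gives a $\sP$ function $g(x)$ that sums these amplified counts over all $r$. Reading $g(x)$ modulo $2^{2^i}$ recovers minus the number of $r$ with $A(x,r)=1$, and comparing that number with $2^{q}/2$ decides $L(x)$ with a single $\sP$ query.

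I do not expect a real obstacle, since both steps are either already proved or standard. The one point needing care is checking that the constant specification is legitimate, i.e.\ that the unrelativized hierarchy embeds into the towers used in Theorem~\ref{thm:ph}, which holds because a machine may simply not query $R$. The hard analytic work has already been done in Lemma~\ref{lem:prg} and Theorem~\ref{thm:ph}.
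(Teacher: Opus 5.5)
Your proposal is correct and follows the paper's proof: the unrelativized tower is a specification $\Pi$ with $L_\Pi(R)=L$ for every $R$, so Theorem~\ref{thm:ph} puts $L$ in $\Almost\pP$ and Proposition~\ref{prop:hard} gives $L\in\BP\pP$, while the paper simply cites Toda for $\BP\pP\subseteq\cP^{\sP}$. One slip in your sketch of that cited step: a single application of $3c^4+4c^3$ only lifts a congruence modulo $2^j$ to one modulo $2^{2j}$ (for $c=1$ it gives $7\not\equiv -1 \pmod{16}$), so you must iterate it about $\log q$ times, which keeps the degree $O(q^2)$ and the sum over $r$ a $\sP$ function.
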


\begin{proof}
If $L\in\PH$ then $L\in\Sigma_k\subseteq\Sigma_k^R$ for some $k$ and every
$R$, so by Theorem~\ref{thm:ph}, $L\in\pP^R$ with probability one. Thus
$L\in\Almost\pP$, and Proposition~\ref{prop:hard} gives $L\in\BP\pP$. The
second statement follows from Toda's inclusion $\BP\pP\subseteq\cP^{\sP}$
\cite{Tod91}; see~\cite{For09} for a short proof of that step.
\end{proof}

\section{Discussion}

\paragraph{Regan and Royer.} Regan and Royer~\cite{RR95} simplified Toda's
proof that $\PH\subseteq\BP\pP$ and showed that $\PH^R$ is properly contained
in $\pP^R$ relative to a random oracle $R$. Theorem~\ref{thm:ph} is the
containment half of their result. Going from the random-oracle statement back
to Toda's unrelativized theorem requires exactly
$\Almost\pP\subseteq\BP\pP$, which is where the correlation bounds
of~\cite{CHLLTV26} enter.

\paragraph{The 2009 simple proof.} Toda's original argument shows
$\PH\subseteq\BP\pP$ by induction on the levels. Its main difficulty is
showing $\BP\pP^{\BP\pP}\subseteq\BP\pP$: the random bits inside an oracle
must be amplified, chosen in advance, and pulled outside. The proof
in~\cite{For09} avoids that step. Valiant--Vazirani gives
$\NP\subseteq\BPP^{\pP}$; relativizing to $\pP$ and applying
Papadimitriou--Zachos gives $\NP^{\pP}\subseteq\BPP^{\pP}$; and the
relativized form of Zachos's theorem~\cite{Zac88}, that
$\NP^A\subseteq\BPP^A$ implies $\PH^A\subseteq\BPP^A$, applied with $A=\pP$
gives $\PH\subseteq\PH^{\pP}\subseteq\BPP^{\pP}$. The randomness bookkeeping is
isolated inside Zachos's theorem, whose proof amplifies and chooses the random
bits before the nondeterministic guesses.

The random-oracle proof also avoids moving probabilistic quantifiers through
oracles, but in a different way. The randomness is global: once $R$ is fixed,
every level of the induction is an ordinary relativized inclusion of classes,
and the probability enters only through a Borel--Cantelli argument about
finitely many errors. Nothing is ever amplified inside an oracle. The price is
paid once, at the end: an exponentially long random oracle has to be replaced
by polynomially many random bits, and a $\pP$ computation looks at the whole
oracle through the parity of exponentially many paths. That replacement needs a
generator fooling $\F$-polynomials of degree $\poly(n)$ on $2^{\poly(n)}$
variables with seed length $\poly(n)$, equivalently an explicit function with
exponentially small correlation with polynomials of polynomial degree. Before
\cite{CHLLTV26}, no such bound was known even for degree $\log n$.

So the two proofs make opposite trades. The 2009 proof is elementary and
self-contained. The random-oracle proof reduces the collapse of the hierarchy
to a triviality and puts all of the difficulty into a single
hardness-versus-randomness step, in the same way that Nisan and
Wigderson~\cite{NW94} obtained $\Almost\PH=\PH$ from the hardness of parity for
constant-depth circuits. It needs a deep circuit lower bound to recover a
theorem that has an elementary proof, but it explains the first half of Toda's
theorem as a collapse that holds relative to a random oracle, plus a
derandomization of that oracle.

\paragraph{Acknowledgment.} This note was drafted with the assistance of
Claude (Anthropic).

\end{document}